\documentclass[doublecolumn,9pt,journal]{IEEEtran}
\usepackage{times}
\usepackage{cite}
\usepackage{color}
\usepackage{epsf}
\usepackage{epsfig}
\usepackage{graphicx}
\usepackage{graphics}
\usepackage{epstopdf}
\usepackage[small]{caption}
\usepackage{amsmath}
\usepackage{amssymb}
\usepackage{amsthm}
\usepackage{amsxtra}
\usepackage[ruled]{algorithm2e}
\usepackage{algorithmic}
\usepackage{enumerate}
\usepackage{multirow}
\usepackage{enumerate}
\usepackage{amssymb}
\usepackage{lipsum}
\usepackage{setspace}
\usepackage{multirow}
\usepackage{subcaption}
\usepackage{url}
\usepackage{colortbl}

\newtheorem{corollary}{Corollary}
\newtheorem{theorem}{\bf Theorem}

\newtheorem{definition}{\bf Definition}
\newtheorem{remark}{Remark}

\usepackage{rotating} 
\usepackage{graphicx} 

\newcommand{\Rmnum}[1]{\expandafter\@slowromancap\romannumeral #1@}

\makeatother
\begin{document}
\title{Grid Influenced Peer-to-Peer Energy Trading}
\author{Wayes~Tushar,~\IEEEmembership{Senior Member,~IEEE,}~Tapan~Kumar~Saha,~\IEEEmembership{Fellow,~IEEE,}~Chau~Yuen,~\IEEEmembership{Senior Member,~IEEE,}~Thomas Morstyn,~\IEEEmembership{Member,~IEEE,}~Nahid-Al-Masood,~\IEEEmembership{Senior Member,~IEEE,}~H.~Vincent~Poor,~\IEEEmembership{Fellow,~IEEE} and Richard Bean
\thanks{W. Tushar and T. K. Saha are with the School of Information Technology and Electrical Engineering of the University of Queensland, Brisbane, QLD 4072, Australia (e-mail: wayes.tushar.t@ieee.org; saha@itee.uq.edu.au).}
\thanks{C. Yuen is with the Engineering Product Development Pillar of the Singapore University of Technology and Design (SUTD), 8 Somapah Road, Singapore 487372. (e-mail: yuenchau@sutd.edu.sg).}
\thanks{T. Morstyn is with the Oxford Martin School of the University of Oxford, Oxford, UK. (e-mail: thomas.morstyn@eng.ox.ac.uk)}
\thanks{N. Masood is with the Department of EEE, Bangladesh University of Engineering and Technology, Dhaka, Bangladesh. (e-mail: nahid@eee.buet.ac.bd)}
\thanks{H. V. Poor is with the Department of Electrical Engineering at Princeton University, Princeton, NJ 08544, USA. (e-mail: poor@princeton.edu).}
\thanks{R. Bean is with the Redback Technologies, Indooroolippy, QLD 4068, Australia. (e-mail: richard@redbacktech.com).}
\thanks{This work is supported in part by the Queensland Government under the Advance Queensland Research Fellowship AQRF11016-17RD2, in part by NSFC 61750110529, and in part by the U.S. National Science Foundation under Grant ECCS-1824710.}
}
\IEEEoverridecommandlockouts
\maketitle
\begin{abstract}This paper proposes a peer-to-peer (P2P) energy trading scheme that can help the centralized power system to reduce the total electricity demand of its customers at the peak hour. To do so, a cooperative Stackelberg game is formulated, in which the centralized power system acts as the leader that needs to decide on a price at the peak demand period to incentivize prosumers to not seeking any energy from it. The prosumers, on the other hand, act as followers and respond to the leader's decision by forming suitable coalitions with neighboring prosumers in order to participate in P2P energy trading to meet their energy demand. The properties of the proposed Stackelberg game are studied. It is shown that the game has a unique and stable Stackelberg equilibrium, as a result of the stability of prosumers' coalitions. At the equilibrium, the leader chooses its strategy using a derived closed-form expression, while the prosumers choose their equilibrium coalition structure. An algorithm is proposed that enables the centralized power system and the prosumers to reach the equilibrium solution. Numerical case studies demonstrate the beneficial properties of the proposed scheme. 
\end{abstract}
\begin{IEEEkeywords}
Peer-to-peer, energy trading, game theory, prosumer, auction, coalition formation.
\end{IEEEkeywords}
 \setcounter{page}{1}
  \section*{Nomenclature}
\addcontentsline{toc}{section}{Nomenclature}
\begin{IEEEdescription}[\IEEEsetlabelwidth{$V_1,~~V_2,$}]
\item[$N$] Total number of participating prosumers.
\item[$T$] Total number of time slots.
\item[$\mathcal{S}$] Set of seller prosumers.
\item[$\mathcal{B}$] Set of buyer prosumers.
\item[$\mathcal{B}_a$] Set of buyer prosumers who trade with auction price.
\item[$\mathcal{S}_a$] Set of seller prosumers who trade with auction price.
\item[$\mathcal{N}$] Set of participating prosumers.
\item[$n$] Index of each prosumer.
\item[$t$] Index of time slot.
\item[$e_{n,d}(t)$] Energy demand of prosumer $n$ at $t$.
\item[$e_{n,g}(t)$] Energy that $n$ buys from CPS at $t$.
\item[$e_{n,p}(t)$] Energy that $n$ buys from another peer at $t$.
\item[$E_{n,b}(t)$] Bidding energy by buying prosumer $n$ at $t$.
\item[$E_{n,s}(t)$] Bidding energy by selling prosumer $n$ at $t$.
\item[$E_S(t)$] Total energy demand by all customers at $t$.
\item[$E_D(t)$] Total energy demand by prosumers at $t$.
\item[$E_O(t)$] Total energy demand by customers other than prosumers at $t$.
\item[$E_T(t)$] Threshold set by the CPS at $t$.
\item[$E_G(t)$] Energy supply capacity of the CPS at $t$.
\item[$p_{g,b}(t)$] Buying price per unit of energy set by the CPS at $t$.
\item[$p_{g,s}(t)$] Selling price per unit of energy set by the CPS at $t$.
\item[$p_{p2p}(t)$] P2P trading price at $t$.
\item[$p_{n,b}(t)$] Bidding price of $n$ at $t$.
\item[$p_\text{auc}$] Auction price.
\item[$p_\text{max}$] Highest reservation price.
\item[$p_\text{FiT}$] Feed-in-tariff price.
\item[$p_\text{mid}$] Mid-market price.
\item[$U_{n,s}(t)$] Utility of $n$ for selling energy at $t$.
\item[$U_{n,b}(t)$] Utility of $n$ for buying energy at $t$.
\item[$J_{c}(t)$] Cost to the CPS at $t$.
\item[$a, b$] Design parameters for the CPS.
\item[$\alpha_n(t)$] Preference parameter of $n$ at $t$.
\item[$\beta$] Design parameter.
\item[$(\cdot^*)$] Equilibrium value of the parameter $(\cdot)$.
\item[$\Gamma$] Stackelberg game.
\item[$\mathbb{D}_{hp}$] Stability index of $\Gamma$.
\item[$\nu$] Value of a coalition.
\item[$\eta_n$] Burden to prosumer $n$.
\end{IEEEdescription}
\section{Introduction}\label{sec:introduction} Recent advancements in cryptocurrencies and blockchain have led to a proliferation of peer-to-peer (P2P) energy trading schemes~\cite{Kang_TII_2017,Wang_PPR1_2019,Wang_Paper2_2018}. Essentially, P2P trading is a next-generation energy management mechanism for the smart grid that enables a customer of the network to independently participate in energy trading with other prosumers and the grid~\cite{Tushar_SPM_July_2018}. Potential benefits of P2P energy trading include renewable energy usage maximization, electricity cost reduction, peak load shaving, prosumer empowerment, and network operation and investment cost minimization.

To realize some of the above-mentioned benefits, a number of studies have been conducted in P2P trading in recent years. These studies can be divided into three general categories. The first category of studies, such as \cite{Mengelkamp_AppliedEnergy_2017,Li_TII_Aug_2018,YueZhou_AE_July_2018,Sorin_PWRS_2019,Luth_AE_Nov_2018,ChaoLong_AE_Sep_2018,ZhangChenghua_EA_June_2018,Morstyn_TSG_2019}, focus on the financial modeling of P2P energy trading market. In \cite{Mengelkamp_AppliedEnergy_2017}, the authors present the concept of a blockchain-based microgrid energy market without the need for central intermediaries. Blockchain-based energy trading to reduce the cost of establishment in energy limited Industrial Internet of Things nodes is also discussed in \cite{Li_TII_Aug_2018}. A multi-agent simulation framework and a consensus-based approach for P2P energy trading in a microgrid are used in \cite{YueZhou_AE_July_2018} and \cite{Sorin_PWRS_2019} respectively. The role of battery flexibility on P2P trading is discussed in \cite{Luth_AE_Nov_2018} and their controls for P2P energy sharing using a two-stage aggregated battery control are presented in \cite{ChaoLong_AE_Sep_2018}. Finally, game theory based contracts between sellers and buyers of the P2P market are developed in \cite{ZhangChenghua_EA_June_2018} and \cite{Morstyn_TSG_2019}.

The second category of studies focuses on the challenges of transferring energy over the physical layer~\cite{Mengelkamp_AppliedEnergy_2017} of the distribution network. For example, \cite{Chapman_TSG_2018} proposes a methodology to assess the impact of P2P transactions on the network and to guarantee an exchange of energy that does not violate network constraints. A real-time attribution of power losses to each P2P transaction involving one generator and one load node is done by defining some suitable indices in \cite{Zizzo_TII_EA_2018}. A similar study of energy blockchain in microgrids can be found in \cite{Silvestre_Nov_2018}. In \cite{Nikolaidis_TPS_2018}, the authors propose a graph-based loss allocation framework for P2P market in unbalanced radial distribution networks. Finally, \cite{Werth_TSG_2018} shows how to achieve decentralization using P2P frameworks as underlying control structures, and then, implement a pure P2P to eliminate single points of failure.

The final category of studies deals with engaging prosumers in P2P energy trading. Examples of such studies include \cite{Morstyn_PWRS_2018}, \cite{Morstyn_NatureEnergy_2018} and \cite{Tushar_Access_Oct_2018}. A new concept of energy classes, allowing energy to be treated as a heterogeneous product, based on attributes of its source which are perceived by prosumers to have value is introduced in \cite{Morstyn_PWRS_2018}. The authors in \cite{Morstyn_NatureEnergy_2018} design a P2P energy trading platform to incentivize prosumers to form federated power plants. Finally, a motivational psychology based game theoretic approach is proposed in \cite{Tushar_Access_Oct_2018} to attract prosumers to participate in P2P energy trading. More survey of P2P energy trading schemes can also be found in \cite{Sousa_RSER_Apr_2019} and \cite{Jogunola_Energies_2017}.

As can be seen from the above discussion, studies related to P2P energy trading are plentiful and have significantly contributed to prosumer-focused energy management. However, as identified in \cite{Mengelkamp_AppliedEnergy_2017}, integrating the P2P energy trading mechanism into the current energy policy requires P2P trading to fit and complement the existing energy system. Otherwise, P2P trading markets would be difficult to implement in grid-connected systems. One potential way to establish the suitability of a P2P energy trading scheme to implement in the traditional energy system is to show how such a scheme can benefit the centralized power system (CPS). For example, by effectively supporting the outdated electrical grids at Brooklyn, Brooklyn microgrid  is actively working closely with utilities and is currently on its way to being licensed as an energy retailer~\cite{Mengelkamp_AppliedEnergy_2017}. Nonetheless, such benefits of P2P energy trading to the traditional power system are yet to be established.

To this end, this paper proposes a comprehensive analytical framework of a dynamic price based P2P energy trading scheme that can help the CPS reducing its cost of energy production and supply to the prosumers at the peak demand period. In particular, we propose a cooperative Stackelberg game that comprises a CPS as the leader and the prosumers as followers. At the peak hour, the CPS strategically chooses the selling price per unit of energy such that supply of energy to prosumers (consequently, the cost of excess generation or reserve) reduces to zero. In response to the choice of price made by the CPS, prosumers, as followers of the game, use the double auction to participate in a coalition formation game. The objective of the followers is to form suitable coalitions, based on their submitted bids, and participate in P2P energy trading with the neighboring peers of the same coalition with the purpose of meeting the demand of energy without interacting with the CPS. We analyze various properties of the resulting game. In particular, it is shown that, due to the strategy proof property of the auction and stability of the formed coalitions, the proposed game possesses a unique cooperative Stackelberg equilibrium. We derive a closed-form pricing function for the CPS and propose an algorithm that the CPS and prosumers can use to reach the solution of the game. Further, we show that the resultant solution is also prosumer-centric. Using numerical simulation, we assess the properties of the proposed scheme.

We stress that the Stackelberg game has been extensively used in the literature such as in \cite{Tushar-TIE:2015,Sabita_TSG_2013,Wayes-J-TSG:2012} and \cite{Wei_AE_2017} for designing energy trading schemes for the smart grid. However, in these studies, the framework of a noncooperative Stackelberg game is used, in which followers, in response to the leader's strategy, participate in a noncooperative game such as a Nash game, Generalized Nash game, or best response strategies. In this paper, on the other hand, followers' responses are captured via a coalition formation game considering its capacity to dynamically consider the impact of environmental changes on  prosumer's decision on cooperative interaction with one another for P2P trading. Thus, the proposed game is a cooperative Stackelberg game. Consequently, the studied properties, the design of the algorithm, and the resulting solutions are substantially different than the existing studies. We further note that price-based coordination of distributed energy resources to support the grid is also a topic of interest in the literature related to virtual power plant (VPP). However, the coordination strategies of DERs for VPP and P2P are different from one another \cite{Morstyn_NatureEnergy_2018}.

The rest of the paper is organized as follows. The problem  is formulated in Section~\ref{lab:ProblemFormulation} and an analytical framework of the cooperative Stackelberg game is proposed in Section~\ref{lab:GameTheory}. The properties of the game is studied in Section~\ref{lab:Properties}. We provide some results from numerical case studies in Section~\ref{lab:NumericalSimulation}, which is followed by some concluding remarks in Section~\ref{lab:Conclusion}.
\section{Problem Formulation}\label{lab:ProblemFormulation}
To formulate the problem, we assume an energy network that consists of a CPS and a large number of customers. Among the customers, $N$ are prosumers, where $N=|\mathcal{N}|$ and $\mathcal{N}$ is the set of all prosumers. In this paper, customers refer to energy entities in the network that buy energy from the CPS. Prosumers, on the other hand, have energy producing capability, can participate in P2P trading, and sell its energy to the CPS or any other third party. Both the CPS and prosumers have two-way communication and power flow facilities \cite{Tushar_TSG_May_2016} and are parts of the P2P network through a blockchain based platform~\cite{Mengelkamp_AppliedEnergy_2017}. 

Each prosumer $n\in\mathcal{N}$ is a rational individual, which is equipped with a rooftop solar panel with or without a battery in a grid-connected system. At any time $t$, a prosumer $n$ meets its energy demand $e_{n,d}(t)$ from its own solar system and, if there is any surplus after consumption and storage, it can sell the excess energy either to the grid or to other energy entities within the system at the next time slot. Similarly, if a prosumer determines an deficiency at the end of $t$, it can buy its required energy either from the grid or from other entities within the system at $t+1$.

The CPS, on the other hand, always needs to meet the total demand $E_S(t)$ of its customers. If the total demand $E_S(t)$ of the customers is very high, for example, during peak hours, the network could be overloaded and the CPS might need to start a new generation unit or always maintain  reserve to meet the extra demand of its customers. This increases the cost to the CPS significantly~\cite{Vazquez-TPWRS:2007}. To this end, we assume that the CPS has a contract with $N$ prosumers, in which the CPS may instruct the prosumers to not demand any energy from the CPS at the peak hour when their total demand $E_D(t)$ is higher than a threshold $E_T(t)$. Here, $E_T(t)$ is set to ensure that the total demand $E_S(t)$ from the customer do not exceed the supply $E_G(t)$ from the CPS. Now, $E_S(t)$ can be defined as
\begin{equation}
E_S(t) = E_D(t) + E_O(t),
\end{equation}
where, $E_O(t)$ is the total demand of the customers excluding the prosumers in set $\mathcal{N}$. Now, when $E_D(t)>E_T(t)$, where the value of $E_T(t):E_S(t)\leq E_G(t)$ can be determined through suitable artificial intelligence technique, such as in \cite{Muralitharan_Neurocomputing_2018}, based on the information of $E_O(t)$, $E_D(t)$ and $E_G(t)$, the CPS sends a very high pricing signal to the prosumers under contract to encourage them to manage their own energy among themselves without getting any supply from the CPS at $t$. The objective of the CPS is to make $E_D(t) = 0$ through its choice of price $p_{g,s}(t)$ per unit of energy for the selected prosumers at the respective peak demand time slot. Thus, on the one hand, the CPS meets the demand of its other customers without raising the cost of production and supply of energy significantly. On the other hand, this enables the prosumers to participate in P2P trading among one another to meet their demands of that particular time period.

Here, it is important to note that prosumers can participate in P2P trading at any time, as studied in \cite{Morstyn_TSG_2019,Tushar_Access_Oct_2018}. However, in this paper, the main objective is to show how P2P trading can be used as a tool to help the CPS to reduce its cost of oversupply of energy during periods of peak demand. Therefore, we assume that prosumers only participate in P2P energy trading when the grid price is very high at the peak hours. Otherwise, the prosumers exchange their energy with the CPS as a rooftop solar owner in a traditional energy market. Now, to capture the benefit and cost of energy trading to the prosumers and CPS respectively, we define a suitable utility function for the prosumer and a cost function of the CPS in the following subsections.

\subsection{Utility Function of the Prosumer}We consider that the utility to a prosumer consists of two parts: 1) the utility of energy usage (consumption and supply), which can be captured by a logarithm function $\log(\cdot)$ as proposed in \cite{Yu_IoTJ_Dec_2014}. 2) The revenue and cost of trading with another party (the CPS or another prosumer), which is a function of the traded energy amount and the price per unit of energy. To this end, we propose to use the following two functions \eqref{Utility_selling} and \eqref{Utility_buying} 
\begin{eqnarray}
U_{n,s} (t) = \alpha_n(t) \log_2(1+e_{n,g}(t)+e_{n,p}(t)) \nonumber\\+ p_{g,b}(t) e_{n,g}(t) + p_{p2p}(t)e_{n,p}(t)
\label{Utility_selling}
\end{eqnarray}
and
\begin{eqnarray}
U_{n,b} (t) = \alpha_n(t) \log_2(1+e_{n,g}(t)+e_{n,p}(t)) \nonumber\\- p_{g,s}(t) e_{n,g}(t) - p_{p2p}(t)e_{n,p}(t)
\label{Utility_buying}
\end{eqnarray}
to capture the utilities $U_{n,s} (t)$  and $U_{n,b} (t)$  that the prosumer $n$ obtains when it sells and buys energy respectively. Here, $\alpha_n(t)$ is a preference parameter of the prosumer $n$ that captures the satisfaction level of the prosumer for using a unit of energy at $t$, and $e_{n,g}(t)$ and $e_{n,p}(t)$ are the energy amount that the prosumer $n$ trades with the CPS and other peers respectively. $p_{g,s}(t)$, $p_{g,b}(t)$ and $p_{p2p}(t)$ are the grid's selling price, grid's buying price and the P2P trading price respectively.  Clearly, as the system is defined, when a prosumer is participating in energy trading with the CPS, it does not perform P2P trading with another prosumer, i.e., $e_{n,p}(t) = 0, \text{when}~ e_{n,g}(t)>0$, and vice versa.

Now, when $e_{n,g}(t) >0$, the utility to a prosumer $n$ for buying its energy from the CPS under such conditions can be expressed as 
\begin{eqnarray}
 U_{n,b} (t) = \alpha_n(t) \log_2(1+e_{n,g}(t)) - p_{g,s}(t) e_{n,g}(t).
 \label{utility_cond1}
\end{eqnarray}
From \eqref{utility_cond1}, the utility attains its maximum value when $\frac{\delta U_{n,b}(t)}{\delta e_{n,g}(t)}=0$, and hence 
\begin{eqnarray}
e_{n,g}(t) = \frac{\alpha_n(t)}{p_{g,s}(t)\ln 2} - 1.
\label{energy_cond1}
\end{eqnarray}
Indeed, from \eqref{energy_cond1}, the amount of energy that a prosumer buys from the CPS is affected by the price set by the CPS. Hence, if the price is very high, prosumers may decide to look for alternate venues for their energy demand, which is the main avenues that is explored in this study.  
\subsection{Cost Function of the CPS}\label{sec:CPS_Cost}We assume that there is a net cost $J_c(t)$ to the CPS when it trades its energy with the prosumers. $J_c(t)$ has a cost component and a revenue component. The revenue component $p_{g,s}(t)E_{D}(t)$ is the revenue that the CPS obtains by selling the total energy $E_{D}(t)$ to the prosumers at a price $p_{g,s}(t)$ per unit of energy. Meanwhile, the cost component refers to the equivalent cost that the CPS needs to pay for meeting the demand of prosumers beyond its predefined threshold, for example, for violating the network constraint or for starting an additional generator to meet the excess demand. Given this context, the cost function of the CPS is proposed to be
\begin{eqnarray}
J_c (t) =  a([E_{D}(t) - E_{T}(t)]^+)^2 + b[E_{D}(t) - E_{T}(t)]^+\nonumber\\ - p_{g,s}(t)E_{D}(t),
\label{utility_CPS}
\end{eqnarray}
where $[\cdot]^+ = \max(\cdot,0)$ and $a,~b>0$. In \eqref{utility_CPS}, we note that the cost to CPS only occurs when the CPS needs to meet a total demand beyond its defined threshold. To minimize the cost in such cases, the CPS needs to choose a suitable price that would encourage prosumers to purchase their required energy from other alternative venues, instead of buying from the CPS. As such, the CPS sets a suitable price $p_{g,b}(t)$ when $E_{D}(t)>E_{T}(t)$ with the purpose of minimizing its cost. Thus, by setting  $\frac{\delta J_c (t)}{\delta E_D(t)}=0$, we obtain the price $p_{g,s}(t)$ as
\begin{eqnarray}
p_{g,s}(t) = 2a(E_D(t) - E_T(t)) + b.
\label{Price_Pgt1}
\end{eqnarray}
Now, based on the total demand $E_D(t)$ and the threshold $E_T(t)$, the CPS can choose a suitable price by setting the values of $a$ and $b$ in order to alter the demand of the prosumers for secured and sustainable operation of energy trading.

Given this context, following \cite{Yu_IoTJ_Dec_2014}, replacing the value $e_{n,g}(t)$ from \eqref{energy_cond1} to \eqref{utility_cond1} , and then taking the first derivative of $U_{b,n}(t)$ with respect to $e_{n,g}(t)$, we find the maximum price $p_{n,\text{max}}$ that the prosumer $n$ needs to pay the CPS for purchasing its required energy at the peak hour as
\begin{eqnarray}
p_{n,\text{max}} = \frac{\alpha_n(t)}{\ln2}.
\label{price_threshold}
\end{eqnarray}

From \eqref{Price_Pgt1} and \eqref{price_threshold}, to influence a prosumer $n$ not to buy any energy at the peak hour, the CPS needs to set its price $p_{g,s}(t)$ such that
\begin{eqnarray}
p_{g,s}(t)>\frac{\alpha(t)}{\ln2},
\label{price_threshold2}
\end{eqnarray}
where $\alpha(t) = \max(\alpha_1(t), \alpha_2(t). \hdots, \alpha_N(t))$. Therefore, for a predefined value of $a$, to keep the total demand $E_D$ below $E_T$, the CPS sets its pricing parameters $b$ in \eqref{Price_Pgt1} such that
\begin{eqnarray}
b>\frac{\alpha(t)-2a\ln2(E_D(t)-E_T(t))}{\ln2}.
\label{condition_b}
\end{eqnarray}

Now, once the price is set by the CPS, the prosumers require to respond in a way such that they do not need to rely on the CPS's energy, and at the same time, are capable of meeting their energy demand. To assist the CPS in reducing its peak demand through P2P, application of such a model will potentially be found in the future energy markets that will either engage prosumers with existing solar panels (and batteries) with the CPS under an Energy Performance Contract (EPC) \cite{Zhang_EnergyPolicy_2018} or receive investment from the local government or the CPS to install solar panels (and batteries) at the prosumers' premises to receive such services (for example, see \cite{CONSOART_2019}).

Now, to capture the decision making process of each prosumer, when $E_{D}(t)>E_{T}(t)$, we propose an energy management scheme based on a cooperative Stackelberg game in the next section.
\section{Cooperative Stackelberg Game}\label{lab:GameTheory}
To study the decision making process of the CPS and the participating prosumers in the proposed grid instructed P2P energy trading scheme, we use the framework of a Stackelberg game $\Gamma$, which is formally defined in its strategic form as
\begin{eqnarray}
\Gamma = \{\left(\mathcal{N}\cup\text{CPS}\right), U_{n\in\mathcal{N}}(t), J_c(t), e_{n,p}(t), p_{g,s}(t), p_{p2p}(t)\}.\label{def:game}
\end{eqnarray}
In \eqref{def:game}, $\Gamma$ consists of the following components:
\begin{enumerate}[i)]
\item The prosumers in set $\mathcal{N}$ that act as followers and cooperate with one another to choose their strategies in response to the price set by the CPS at the peak hour.
\item $U_n(t)\in\{U_{n,s}(t), U_{n,b}(t)\}$ is the utility of prosumer $n$ for trading its energy at $t$, either as a seller or a buyer, with other prosumers and the CPS.
\item $J_c(t)$ is the cost to the CPS at $t$.
\item $e_{n,p}(t)$ and $p_{p2p}(t)$ are the strategies, that is traded energy and the price per unit of traded energy in P2P market, of the followers while participating in P2P trading.
\item $p_{g,s}(t)$ is the strategy of the CPS when $E_{D}(t)>E_{T}(t)$.
\end{enumerate}

In the proposed $\Gamma$, as discussed previously, the objective of the CPS is to set a price that reduces its energy trading with the prosumers to zero when $E_{D}(t)>E_{T}(t)$. The aims of the prosumers with energy deficiency, on the other hand, are to buy the necessary energy from prosumers of the community with energy surplus via forming suitable coalitions for P2P energy trading. A suitable solution of the proposed $\Gamma$ is the cooperative Stackelberg equilibrium (CSE). At the CSE, neither the CPS nor any prosumer will have any incentive to choose alternative strategies to improve their benefits in terms of achieved cost and utility respectively.
\begin{definition}
Consider the game $\Gamma$ defined in \eqref{def:game}, where $U_{n,s}(t)$, $U_{n,b}(t)$ and $J_c(t)$ are determined by \eqref{Utility_selling}, \eqref{Utility_buying} and \eqref{utility_CPS} respectively. Now, a set of strategies $(p_{g,s}^*(t), \mathbf{e}^*(t), \mathbf{p}_{p2p}^*(t))$ constitutes a CSE of the proposed $\Gamma$, if 
\begin{equation}
{J_c}({p_{g,s}^*})(t) = 0
\end{equation}and
the followers' strategies ($\mathbf{e}^*, \mathbf{p}_{p2p}^*$) in response to $p_{g,b}^*(t)$ establish a $\mathbb{D}_{hp}$ stable coalition structure.
\label{definition:1}
\end{definition} 
\begin{definition}
A group of coalitions is said to be in $\mathbb{D}_{hp}$ stable if, at a given time slot, no player has an interest to split from one coalition and form another new coalition for a better utility.
\end{definition}
\subsection{Leader's Strategy}
Based on our discussion in Section~\ref{sec:CPS_Cost} and the utility of the CPS considered in \eqref{price_threshold2}, clearly the CSE strategy of the CPS is 
\begin{eqnarray}
p_{g,s}^*(t) = 2a^*(E_D(t) - E_T(t)) + b^*.
\label{eqn:Leader-strategy}
\end{eqnarray} 
Thus, at the peak hour, to influence the prosumers to buy their required energy through participating in P2P trading, the CPS sets $p_{g,s}(t)=p_{g,s}^*(t)$, according to \eqref{eqn:Leader-strategy}.  In \eqref{eqn:Leader-strategy}, the value of $a^*$ and $b^*$ can be obtained either based on a tabular database, in which the various pre-defined values of $a^*$ and $b^*$ are stored for different scenarios based on the historical cases, or by forming and executing an additional optimization technique that provides the values of $a^*$ and $b^*$ for the considered scenario at $t$. Nonetheless, in both cases, the values of $a^*$ and $b^*$ will need to satisfy condition \eqref{condition_b}.

Now, according to Definition~\ref{definition:1},  to reach the CSE solution, the followers need to find their strategies to form a coalition structure in response to $p_{g,s}^*(t)$, which is $\mathbb{D}_{hp}$ stable. Given this context, we formulate a coalition formation game among the prosumers in the next section.
\subsection{Followers' Response}\label{sec:FollowersResponse}In response to the leader's strategy, prosumers, as the followers of $\Gamma$, participate in a coalition formation game to decide on their own strategies. Coalition formation game is a branch of cooperative game, in which the main purpose is to analyze the formation of the coalition structure through players' interaction, and study its properties and its adaptability to the environmental variation~\cite{Saad-coalition:2009}. It can be formally defined as a pair $(\mathcal{N},\nu)$, where $\mathcal{N}$ is the set of all participating players and $\nu$ is the value of coalition. Essentially, $\nu$ is a real number assigned to every coalition of prosumers $\mathcal{S}\subset\mathcal{N}$ for P2P trading. 

In the proposed coalition formation game, a prosumer $n$ can independently choose other prosumers with whom it wants to form coalition to trade energy. This decision is motivated by the individual benefit $U_n$ that a prosumer $n$ can obtain by cooperating with the selected prosumers. Thus, in a coalition formation game, choosing a particular coalition is based upon the individual benefit to the prosumer, rather than the coalition value $\nu$. Of course, a prosumer may also choose not to participate in the P2P trading, rather buy its energy from the CPS despite the price $p_{g,s}^*(t)$. Such behaviour is referred to as the non-cooperative behavior of the prosumer in this paper. Nonetheless, to determine how a prosumer chooses other neighboring prosumers to form a coalition in order to participate in P2P trading, we assume that prosumers adopt a double auction scheme, as described in the following section.
\subsubsection{Double auction between prosumers}\label{sec:doubleauction}A double auction market involves a number of buyers and sellers that interact with one another with their reservation prices and bids to decide 1) the price of P2P trading and 2) how much energy each prosumer can trade with one another. Thus, due to the interactive and independent nature of the decision making process of each participant, a double auction is chosen for the proposed P2P energy trading between the prosumers. 

In general, the proposed auction scheme consists of three key elements:
\begin{itemize}
\item Seller: The prosumers in $\mathcal{S}\subset\mathcal{N}$ that have surplus energy to sell.
\item Buyer: The prosumers in $\mathcal{B}\subset\mathcal{N}$ that need to buy energy to meet their needs.
\item Auctioneer: An auctioneer could either be a third party such as the distribution system operator (DSO) that can participate in the trading through a digital trading platform such as the Elecbay, as proposed in \cite{ZhangChenghua_EA_June_2018}, or an automated and secured information system such as the blockchain~\cite{Kang_TII_2017,Andoni_RSER_2019}, which is able to accurately decide the auction price and trading energy amount following some well defined rules based on the information provided by all prosumers.
\end{itemize}
We assume that the auction between prosumers are conducted following two steps. These steps are explained as follows.

\emph{Step 1- Determination of auction price and participating prosumers:} In this step of the proposed auction process, the auctioneer determines the number of prosumers that intend to form a coalition group to trade their energy based on the outcomes determined by the auction process. To do so, first, at each time slot, each seller $n\in\mathcal{S}$ submits its reservation price $p_{n,s}$ and energy $E_{n,s}$ that it is interested to sell to the auctioneer. Similarly, each buyer $n\in\mathcal{B}$ submits its bidding price $p_{n,b}$ and energy $E_{n,b}$ that it is interested to buy from the auctioneer. Second, once all the ordered information is received by the auctioneer, it arranges the reservation prices in  ascending order and the bidding prices in descending order~\cite{Saad_DoubleAuction_Oct_2011}. That is, $p_{n,s},~\forall n\in\mathcal{S}$ and $p_{n,b},~\forall n\in\mathcal{B}$ are ordered without loss of generality as
\begin{eqnarray}
p_{1,s}<p_{2,s}<\hdots<p_{S,s}~\text{and}~p_{1,b}>p_{2,b}>\hdots> p_{B,b}
\label{eqn:step1_eqn1}
\end{eqnarray}
respectively, where $B=|\mathcal{B}|$ and $S=|\mathcal{S}|$.

\begin{figure}[t!]
\includegraphics[width=\columnwidth]{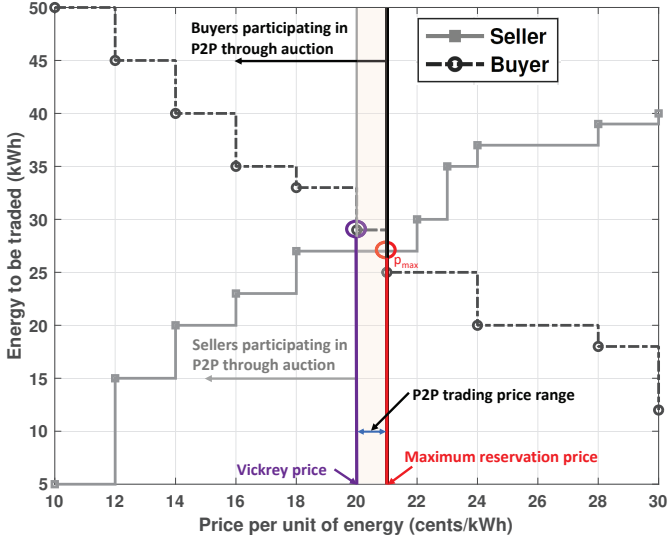}
\caption{A demonstration of how the auction price is determined in the proposed scheme.}
\label{fig:AuctionPrice}
\end{figure}
Third, the auctioneer generates the aggregated supply and demand curves using \eqref{eqn:step1_eqn1} to determine the intersection point of the two curves, as shown in Fig.~\ref{fig:AuctionPrice}. Clearly, the price $p_\text{max}$ at the intersection point in Fig.~\ref{fig:AuctionPrice} refers to the highest reservation price for the sellers. Now,  once $p_\text{max}$ is determined, there are different mechanisms available in the literature to decide the auction price\footnote{Note that while such dynamic pricing is not being used for residential electricity rates in most parts of the world at present, with the emergence of P2P energy trading, it is being used in existing P2P literature~\cite{Tushar_TSG_May_2016} and pilot trials~\cite{Long_Conf_2017}.}. For example, in the Vickrey auction mechanism \cite{Vickrey_JF_Mar_1961}, the auction price (also known as Vickrey price $p_\text{vic}$) is considered to be the second highest reservation price. In \cite{Tushar_TSG_May_2016}, the authors propose a Stackelberg game and show that the auction price corresponds to the trading price at the Stackelberg equilibrium. Nevertheless, without loss of generality, in this work, the auction price $p_\text{auc}$ is assumed to be as same as $p_\text{max}$, that is, $p_\text{auc} = p_\text{max}$.

Once the auction price $p_\text{auc}$ is calculated, then determine the number of buyers $|\mathcal{B}_a| = B_a \leq B$ and sellers $|\mathcal{S}_a|=S_a\leq S$ that will participate in P2P trading based on the auction price. For example, according to Fig.~\ref{fig:AuctionPrice}, clearly all the prosumers $\forall n\in\mathcal{B}_a\cup\mathcal{S}_a$ that satisfy the condition $\{p_\text{auc}:p_{n,s} \geq p_{n,b}\}$ will trade their energy among themselves using P2P trading. 

\emph{Step 2- Allocation of energy:} Once $p_\text{auc}$ is determined, the amount of energy $e_{n,s}$ that each seller $n\in\{1, 2, \hdots, S_a\}$ sells in the auction market is influenced by the demand of each buyer $m\in\{1, 2, \hdots, B_a\}$. In particular,
\begin{eqnarray}
e_{n,s}= \begin{cases}
E_{n,s} & \text{if} \sum_{n=1}^{S_a}E_{n,s}\leq \sum_{n=1}^{B_a}E_{n,b}\\
\left(E_{n,s} - \eta_n\right)^+ & \text{if} \sum_{n=1}^{S_a}E_{n,s}> \sum_{n=1}^{B_a}E_{n,b}
\end{cases}.
\label{eqn:allocation}
\end{eqnarray}
Clearly, each seller prosumer can sell their reserved energy to the buyers when the deficiency is at least equal to the surplus. If the surplus is higher, however, each seller $n$ experiences a burden of $\eta_n$ that they cannot sell to the buyers. There could be different ways to determine the burden to each seller. Examples of such techniques include proportionate distribution and equal distribution. In this study, we propose to use the equal distribution scheme for deciding the burden of each prosumer. This is due to the fact that equal distribution has been proven to be more suitable to deliver a strategy proof auction mechanism~\cite{Huang_CI_Nov_2002}. To this end, we define $\eta_n$ as
\begin{equation}
\eta_n = \frac{1}{S_a}\left(\sum_{n=1}^{S_a}E_{n,s}-\sum_{n=1}^{B_a}E_{n,b}\right).
\label{eqn:burden}
\end{equation} 
\subsubsection{Formation of different coalition}
\begin{figure}[t!]
\centering
\includegraphics[width=\columnwidth]{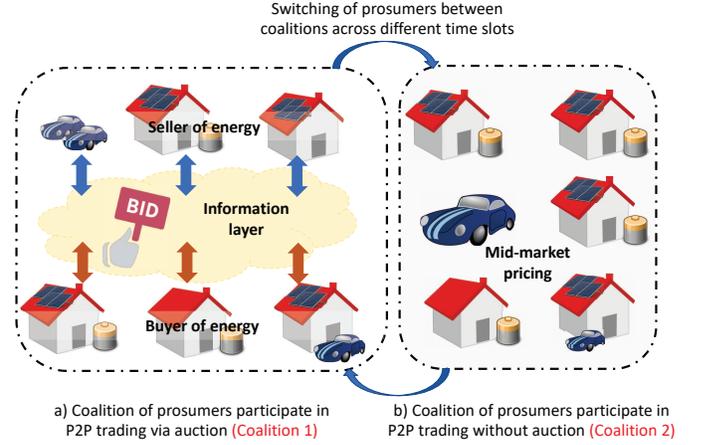}
\caption{This figure demonstrates how a prosumer may choose different coalitions to participate in P2P trading at different time slots of the day.}
\label{fig:CoalitionFormation2}
\end{figure}
As the auction is designed, clearly, prosumers, for which the condition $\{p_\text{auc}:p_{n,s} \geq p_{n,b}\}$ is not satisfied, will not trade their energy using the auction price. Hence, we consider that the prosumers that cannot participate in P2P trading based on $p_\text{auc}$ form another coalition among themselves to trade their energy in a P2P fashion at the market clearing price. In fact, mid-market rate is currently being used in a different P2P pilot project\footnote{For example, in the Horizon P2P-SmartTest community microgrid.}\cite{Long_Conf_2017}. In this work, we define the mid-market price selling $p_\text{mid}$ as a function of $p_\text{auc}$ and feed-in-tariff (FiT) price $p_\text{FiT}$, that is 
\begin{equation}
p_\text{mid} = \frac{p_\text{auc}+p_\text{FiT}}{2},
\label{eqn:mid-market}
\end{equation} and the mid-market buying price as $(1+\beta)p_\text{mid}$. Here, $\beta p_\text{mid},~\beta>0$ is the price that buyer needs to pay as a subscription fee for using the network for P2P trading~\cite{Yu_IoTJ_Dec_2014}. Here, it is important to note the mid-market price always brings lower benefit to both sellers and buyers of the respective coalition compared to trade with $p_\text{auc}$. Further, unlike the proposed auction process, prosumers cannot directly choose the trading price under mid-market rate. Thus, it is reasonable to state that prosumers will always be interested to rebid when instructed by the CPS for P2P trading  in the next available time slot .

\subsubsection{Coalition formation algorithm}The formation of a coalition between different prosumers for P2P trading, as shown in Fig.~\ref{fig:CoalitionFormation2}, is influenced by a prosumer's decision to socially interact with one another for trading its energy in selected time slots (when $p_{g,s} =p_{g,s}^* $), given the fact that it cannot trade its energy with the grid. Therefore, in each time slot, when prosumers are instructed for P2P trading, first, a prosumer decides on an energy amount that it wants to trade in the auction market and put a reservation price for that. Second, based on all the reservation prices and submitted energy amounts, auction price $p_\text{auc}$ is determined by the auctioneer following the process explained in Section~\ref{sec:doubleauction} and subsequently the members of the coalition $\mathcal{B}_a\cup\mathcal{S}_a$ that will trade energy with one another based on $p_\text{auc}$ are confirmed. Finally, once $\mathcal{B}_a\cup\mathcal{S}_a$ is confirmed for the selected time slot, the rest of the prosumers $\mathcal{N}\setminus\mathcal{B}_a\cup\mathcal{S}_a$ decides to form a second coalition to trade their energy among themselves at the mid-market pricing schemes. The detail of the algorithm is shown in Algorithm~\ref{algo:1}. The assumption of an accurate forecast of total demand $E_D(t)$ in Algorithm \ref{algo:1} is motivated by the unprecedented performance improvement demonstrated by the recently proposed state estimation algorithms \cite{Wang_TCNS_2019}.

\begin{algorithm}[t!]
\caption{Algorithm to form a stable coalition structure for P2P trading.}
\label{algo:1}
\begin{algorithmic}[1]
\STATE Set $a_n$ and $b_n$ of each prosumer $n$.
\FOR{Each time $t\in\{1, 2, \hdots, T\}$}
\STATE The CPS accurately forecasts the total demand $E_T(t)$ from prosumers.
\IF{$E_T(t)<E_D(t)$}
\STATE Coalition formation algorithm terminates.
\ELSE
\STATE The CPS sets $p_{g,s}(t) = p_{g,s}^*(t)$ according to \eqref{eqn:Leader-strategy}.
\STATE Prosumers receive $p_{g,s}^*(t)$ from the CPS.
\STATE\hspace{-3mm}\textbf{\emph{Coalition formation algorithm}}
\STATE Each seller prosumer $n\in\mathcal{S}$ submits its bid $\{p_{n,s}(t),E_{n,s}(t)\}$ to the auctioneer.
\STATE Each buyer prosumer $n\in\mathcal{B}$ submits its bid $\{p_{n,b},E_{n,b}\}$ to the auctioneer.
\STATE The auctioneer determines $p_\text{auc}$ and consequently $\mathcal{B}_a$ and $\mathcal{S}_a$ following Section~\ref{sec:doubleauction}.
\STATE The mid-market price $p_\text{mid}$ is calculated via \eqref{eqn:mid-market}.
\STATE The prosumers in $\mathcal{B}_a\cup\mathcal{S}_a$ form coalition $1$ and the prosumers in $\mathcal{N}\setminus(\mathcal{B}_a\cup\mathcal{S}_a)$ form coalition $2$ for P2P trading with $p_\text{auc}$ and $p_\text{mid}$ respectively.
\STATE\hspace{-3mm}\textbf{A stable coalition structure is formed at time slot $t$.}
\ENDIF
\ENDFOR
\end{algorithmic}
\end{algorithm}

Once the coalition structure is formed following Algorithm~\ref{algo:1},  the prosumers trade their energy with one another within each respective coalition. Prosumers can adopt any standard available techniques, such as the technique proposed in \cite{Imran_P2P_GM_2019} or a pool based auction, to choose partners for P2P trading within each coalition. For the coalition with the auction price, the traded energy amount by each prosumers is decided by \eqref{eqn:allocation}. As for the coalition with mid-market price, the P2P trading is completed following the process described in \cite{Tushar_Access_Oct_2018}. 

\begin{remark}Indeed, it is possible that, at any selected time slot of P2P trading, the total deficiency is greater than the total surplus within a coalition and vice versa. In that case, the prosumers with energy deficiency can either reschedule their activities to another time slot or trade the energy from a third party, such as a large neighborhood storage, at a different third party price.\end{remark}

\section{Properties of the Stackelberg Game}\label{lab:Properties}
\subsection{Existence of A Stable CSE}To study the properties of the proposed game,  we note that the strategy chosen by the CPS in \eqref{eqn:Leader-strategy} delivers a unique outcome to the CPS for any value of $b$ that satisfies~\eqref{condition_b}. Thus, to determine whether the proposed $\Gamma$ has a unique and stable CSE, it is sufficient to only investigate whether the coalition structure decided by the prosumers through the proposed auction process is $\mathbb{D}_{hp}$ stable.  

Now, to determine the stability of the proposed coalition structure, first, we define the strategy-proof property of an auction mechanism, which ensures that no prosumer can deviate or cheat from its revealed strategy during the auction process without affecting the strategy of other prosumers. Finally, considering the strategy-proof property of the proposed auction mechanism, we prove the stability of the proposed coalition formation framework, which subsequently proves the existence of the CSE.

\begin{definition}
An auction mechanism is said to be strategy-proof, if the participants reveal their true strategies during the auction process and do not cheat and deviate from their chosen strategies during the trading. 
\end{definition}
\begin{theorem}
The proposed auction mechanism followed by the prosumers to decide on their respective coalition and subsequent P2P trading is strategy-proof.
\label{theorem:1}
\end{theorem}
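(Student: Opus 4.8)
The plan is to establish strategy-proofness by showing that truthful revelation of both the reservation/bidding price and the traded energy is a weakly dominant strategy for every participant, and that once the auction outcome is fixed no participant gains by deviating during the trading phase. I would organize the argument along two independent axes, manipulation of the declared price and manipulation of the declared quantity, treating sellers and buyers symmetrically.

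For the price axis, the central fact I would prove first is a \emph{price-independence} property: the clearing price $p_\text{auc}=p_\text{max}$ is fixed by the intersection of the aggregated supply and demand curves built from the ordered bids in \eqref{eqn:step1_eqn1}, and the amount a participant pays or receives equals $p_\text{auc}$ and is therefore independent of its own declared price, so long as its participation status (inside or outside $\mathcal{B}_a\cup\mathcal{S}_a$) is unchanged. Given this, I would run a short case analysis. A seller whose true reservation price satisfies $p_{n,s}\le p_\text{auc}$ already trades at $p_\text{auc}$; over-reporting either leaves it inside the set with the same receipt or pushes it outside and forfeits a profitable sale, while under-reporting cannot raise a receipt that is already $p_\text{auc}$. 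A seller with $p_{n,s}>p_\text{auc}$ can only enter by under-reporting, but then it would sell at $p_\text{auc}$ below its true cost, yielding strictly negative utility. The buyer cases are the mirror image, using the descending order of $p_{n,b}$ and the reservation bound $p_{n,\text{max}}$ from \eqref{price_threshold}. In each case the deviation is weakly dominated by truthful reporting.

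For the quantity axis, I would invoke the equal-distribution rule \eqref{eqn:burden}. The key is that the per-seller burden $\eta_n=\frac{1}{S_a}\left(\sum_{n=1}^{S_a}E_{n,s}-\sum_{n=1}^{B_a}E_{n,b}\right)$ is apportioned identically across all active sellers, so a seller cannot shift a disproportionate share of the unsold surplus onto others by shading its declared $E_{n,s}$. I would show that the allocation map \eqref{eqn:allocation} together with \eqref{eqn:burden} makes truthful quantity reporting optimal, following the equal-distribution argument of \cite{Huang_CI_Nov_2002}. Finally, because the price and allocation are committed and enforced by the auctioneer (the DSO or the blockchain layer), no participant can profitably renege during the trading stage, which completes the ``no deviation during trading'' requirement of the definition.

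The step I expect to be the main obstacle is the price-independence property at the \emph{marginal} participant, where a single deviating bid could in principle move the intersection point and hence $p_\text{max}$ itself; the clean ``receipt equals $p_\text{auc}$ regardless of own bid'' argument must be checked carefully there, and I would handle it by showing that any bid moving the clearing point either leaves the deviator with an unchanged effective price or relocates it to the losing side of the threshold. A secondary difficulty is making the quantity-truthfulness claim reasonably self-contained rather than leaning entirely on \cite{Huang_CI_Nov_2002}, since the burden in \eqref{eqn:burden} does depend on the reported quantities through the aggregate sums.
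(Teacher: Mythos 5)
Your plan is considerably more ambitious than the paper's own proof, and the two arguments are genuinely different in scope. The paper does not attempt a mechanism-design argument over reported prices at all: its entire proof is the quantity axis, and even there it only addresses \emph{ex-post} deviation during the trading phase. Concretely, the paper computes the equal-distribution burden $\eta_n^*$ from the revealed quantities via \eqref{eqn:burden}, then observes that if a single seller $i$ trades $E_{i,s}'$ instead of the revealed $E_{i,s}^*$, the realized aggregate surplus no longer matches the burden that was already apportioned equally to every seller, so the allocation becomes inconsistent --- hence no seller (and, symmetrically, no buyer) can unilaterally deviate from its revealed quantity without affecting the others. That is the whole argument; truthful \emph{reporting} of prices and quantities at the bidding stage is never analyzed. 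Your decomposition into a price axis and a quantity axis, with sellers and buyers treated as mirror cases, is the standard and stronger notion of strategy-proofness, and your quantity-axis argument via the equal-distribution rule is essentially a dominant-strategy restatement of what the paper proves more informally.

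The genuine gap is in your price axis, and it sits exactly where you predicted. Your case analysis rests on the price-independence property that a participant's payment or receipt equals $p_\text{auc}$ regardless of its own declared price. But the paper sets $p_\text{auc}=p_\text{max}$, the highest reservation price at the intersection of the supply and demand curves, so the price-setting marginal seller's own bid \emph{determines} the clearing price: by inflating its reported reservation price slightly while remaining on the trading side of the intersection, that seller raises $p_\text{auc}$ and strictly increases its receipt. Your proposed fix --- that any bid moving the clearing point either leaves the deviator's effective price unchanged or pushes it to the losing side --- does not hold for this seller, and this is the classical reason clearing-price double auctions are not dominant-strategy incentive compatible (mechanisms such as McAfee's exclude the marginal pair precisely to restore truthfulness). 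The paper never confronts this because it never claims price truthfulness; if you want to keep the price axis, you would either need to restrict attention to non-marginal participants, change the price rule, or weaken the claim to match the paper's narrower ``no deviation during trading'' statement.
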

\begin{proof}
Let us assume that the energy amount that each prosumer $n$ reveal to trade via auction is $E_{n,s}^*$ and $E_{n,b}^*$ for a seller and buyer respectively. Now, according to \eqref{eqn:burden}, if the total available supply and demand are $\displaystyle\sum_{n\in\mathcal{S}_a}E_{n,s}^*$ and $\displaystyle\sum_{n\in\mathcal{B}_a}E_{n,b}^*$ respectively, the burden shared by each participating seller $n\in\mathcal{S}_a$ is
\begin{equation}
\eta_n^* = \frac{1}{S_a}\left(\sum_{n=1}^{S_a}E_{n,s}^*-\sum_{n=1}^{B_a}E_{n,b}^*\right).
\label{eqn:burden1}
\end{equation} 
Now, let us assume that one prosumer $i$, which is a seller, cheats during the energy trading and trades $E_{n,s}^{'}$ instead of $E_{n,s}^*$. Then, the burden at the solution is
\begin{equation}
\eta_n^* = \frac{1}{S_a}\left(\sum_{n\in S_a, n\neq i}E_{n,s}^* + E_{{i\in\mathcal{S}_a},s}^{'}-\sum_{n\in B_a}E_{n,b}^*\right),
\label{eqn:burden1}
\end{equation}
which is impossible. This is due to the fact that, as the scheme is proposed, the burden, which is shared equally by all seller prosumers only possess the value $\eta_n^*$ only if all $\forall n\in\mathcal{S}_a$ stick to $E_{n,s}^*$ for trading, as they revealed during the auction.

Similarly, by considering that one buyer $j\in\mathcal{B}_a$ chooses to buy $E^{'}_{n,b}$ instead of $E^*_{n,b}$ during the trading process, it can be shown that the buyers also need to stick to their revealed strategies and cannot cheat in the proposed auction. Hence, no prosumer would cheat and deviate from their chosen strategies without affecting others, and this subsequently proves the strategy-proof property of the proposed auction scheme.
\end{proof}
\begin{theorem}
At any given time slot, the network structure or partitions resulting from the followers response to the decision made by the CPS is $\mathbb{D}_{hp}$ stable.
\end{theorem}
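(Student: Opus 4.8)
The plan is to reduce the claim to the stability of the prosumers' partition alone and then rule out every profitable single-player deviation. As already observed just before the statement, the leader's price in \eqref{eqn:Leader-strategy} is the unique best response of the CPS for any admissible $b$, so the only thing left to verify is that the partition produced by the auction is $\mathbb{D}_{hp}$ stable. By the definition of $\mathbb{D}_{hp}$ stability, this amounts to showing that no prosumer can strictly increase its utility by splitting off from its current coalition and joining (or forming) another. The partition here consists of exactly two blocks, $\mathcal{C}_1 = \mathcal{B}_a\cup\mathcal{S}_a$ trading at $p_\text{auc}$ and $\mathcal{C}_2 = \mathcal{N}\setminus(\mathcal{B}_a\cup\mathcal{S}_a)$ trading at $p_\text{mid}$, so I would check the two possible directions of deviation separately: a member of $\mathcal{C}_1$ moving to $\mathcal{C}_2$, and a member of $\mathcal{C}_2$ moving to $\mathcal{C}_1$.

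First I would treat a prosumer in $\mathcal{C}_1$. From \eqref{Utility_selling} and \eqref{Utility_buying}, with $e_{n,g}(t)=0$ during P2P trading, the seller utility $U_{n,s}(t)$ is strictly increasing in the trading price while the buyer utility $U_{n,b}(t)$ is strictly decreasing in it. Since $p_\text{auc}>p_\text{mid}$ for a seller and the mid-market buying price $(1+\beta)p_\text{mid}$ exceeds $p_\text{auc}$ for a buyer -- this is exactly the statement, recorded after \eqref{eqn:mid-market}, that the mid-market rate always yields strictly lower benefit than $p_\text{auc}$ to both sides -- any member of $\mathcal{C}_1$ strictly prefers its present coalition to $\mathcal{C}_2$. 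Hence no $\mathcal{C}_1$ prosumer has an incentive to split off.

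Next I would treat a prosumer in $\mathcal{C}_2$, which is where the real work lies. Membership in $\mathcal{C}_1$ is decided purely by the auction matching condition at $p_\text{auc}$ (the intersection criterion of Fig.~\ref{fig:AuctionPrice}), so to join $\mathcal{C}_1$ a $\mathcal{C}_2$ prosumer would have to present a bid that clears at $p_\text{auc}$. Here I would invoke Theorem~\ref{theorem:1}: because the auction is strategy-proof, every reported bid coincides with the prosumer's true valuation, and a prosumer landed in $\mathcal{C}_2$ precisely because its truthful bid fails the matching condition -- a buyer's genuine willingness to pay lies below $p_\text{auc}$ and a seller's genuine reservation lies above $p_\text{auc}$. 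Consequently, being served at $p_\text{auc}$ inside $\mathcal{C}_1$ would violate that prosumer's individual rationality (a buyer would pay more than its valuation, a seller would receive less than its reservation), so the move cannot raise its utility. The only remaining way to enter $\mathcal{C}_1$ is to misreport, but strategy-proofness guarantees that no such deviation is profitable either. Thus no $\mathcal{C}_2$ prosumer benefits from splitting off.

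I expect this last step -- linking a coalition switch to a bid deviation and closing it off with the strategy-proof property -- to be the main obstacle, since the naive price comparison (a seller seeing the higher $p_\text{auc}$) points the wrong way and must be corrected by the individual-rationality reading of the reservation and bid prices guaranteed by Theorem~\ref{theorem:1}. Once both directions are ruled out, no prosumer can profitably deviate from the two-block partition, so it is $\mathbb{D}_{hp}$ stable, and combined with the uniqueness of the leader's strategy this establishes the existence of a unique, stable CSE.
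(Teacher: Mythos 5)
Your analysis of switches between the two existing blocks is reasonable, and your argument for why a prosumer in $\mathcal{N}\setminus(\mathcal{S}_a\cup\mathcal{B}_a)$ cannot profitably enter the auction coalition (its truthful bid fails the clearing condition, so entry requires a misreport that Theorem~\ref{theorem:1} rules out, and being served at $p_\text{auc}$ would violate its individual rationality) is actually more explicit than what the paper writes for that step. But there is a genuine gap: the deviations targeted by the paper's definition of $\mathbb{D}_{hp}$ stability are not moves between the two existing coalitions but splits that \emph{form a new coalition}, including the singleton case of acting noncooperatively (buying from the CPS at $p_{g,s}^*(t)$ or from a third party such as neighborhood storage). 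Your proof never rules these out. In particular, nothing in your argument prevents a subset of the mid-market coalition from breaking off into its own disjoint sub-coalition, or an individual prosumer from opting out of P2P trading altogether. The paper closes exactly this hole by invoking the result of \cite{Tushar_Access_Oct_2018} that, under mid-market pricing, forming a single coalition always yields more benefit than acting noncooperatively or forming multiple disjoint coalitions; some such argument or citation is indispensable, because the price comparison between $p_\text{auc}$ and $p_\text{mid}$ says nothing about these alternatives.

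A secondary caution: your claim that the mid-market buying price $(1+\beta)p_\text{mid}$ exceeds $p_\text{auc}$ does not follow from \eqref{eqn:mid-market} --- since $p_\text{mid}<p_\text{auc}$, it holds only when $\beta$ is sufficiently large --- and the paper itself only asserts, without proof, that mid-market trading brings lower benefit to both sellers and buyers. You may lean on that assertion as the paper does, but it is an assumption about the parameters rather than a consequence of the displayed formulas, and your write-up presents it as the latter.
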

\begin{proof}
To prove this theorem, first, we note that the prosumers that participate in the P2P trading via auction price are decided by the proposed auction mechanism as discussed in the previous section. Since the auction process is strategy-proof, as proven in Theorem~\ref{theorem:1}, no participating prosumers in the auction would leave the coalition and become a part of another new coalition or act noncooperatively during that particular time slot.

Second, as the scheme is designed, prosumers that cannot participate in the auction form another new coalition among themselves for P2P trading at the mid-market price set by \eqref{eqn:mid-market}. Here it is important to note that, indeed prosumers may also decide to act noncooperatively instead of forming another coalition or form multiple joint coalitions. Nonetheless, it is shown in \cite{Tushar_Access_Oct_2018} that at a mid market pricing for P2P trading, prosumers always benefit more by forming a single coalition with one another instead of acting noncooperatively or forming multiple disjoint coalitions. Therefore, it is reasonable to consider that, as a rational entity, each prosumer in $\mathcal{N}\setminus(\mathcal{S}_a\cup\mathcal{B}_a)$ will always choose to form a coalition among themselves without acting noncooperatively.

Thus, at any given each time slot of P2P energy trading, the coalition formation game among prosumers always results in two coalitions with set of players $(\mathcal{S}_a\cup\mathcal{B}_a)$ and $\mathcal{N}\setminus(\mathcal{S}_a\cup\mathcal{B}_a)$, in which no player has any incentive to leave its own coalition for a greater benefit. This establishes the fact that the network structure or partitions resulting from the followers response to the decision made by the CPS is $\mathbb{D}_{hp}$ stable. 
\end{proof}

Now, based on the above discussion, clearly, in response to the unique decision made by the CPS in \eqref{eqn:Leader-strategy}, the prosumers form a stable group of partitions for P2P trading. Consequently, the following corollary holds true.
\begin{corollary}
The proposed cooperative Stackelberg game $\Gamma$ between the CPS and the prosumers possesses a unique and stable CSE.
\label{cor:corollary1}
\end{corollary}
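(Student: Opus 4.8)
The plan is to prove the corollary by combining the two clauses of Definition~\ref{definition:1}: the leader's requirement $J_c(p_{g,s}^*)(t)=0$ and the $\mathbb{D}_{hp}$ stability of the followers' coalition structure. Since a CSE is, by definition, a strategy profile satisfying both conditions simultaneously, I would establish existence, uniqueness, and stability by treating the leader's side and the followers' side separately and then noting that they interlock through the one-way dependence of the auction outcome on the announced price $p_{g,s}^*(t)$.

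First I would address the leader. I would argue that the closed-form pricing rule \eqref{eqn:Leader-strategy}, obtained by setting $\frac{\delta J_c(t)}{\delta E_D(t)}=0$, returns a single price for any pair $(a^*,b^*)$ satisfying condition \eqref{condition_b}, and that, by \eqref{price_threshold2}, this price drives the prosumers' demand on the CPS to zero so that $J_c(p_{g,s}^*)(t)=0$. Hence the leader's half of the equilibrium condition is met by a uniquely determined strategy, and no alternative price can simultaneously keep the CPS cost at zero and discourage every prosumer from purchasing.

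Next I would turn to the followers. Stability is immediate from the preceding theorem: the partition into $(\mathcal{S}_a\cup\mathcal{B}_a)$ and $\mathcal{N}\setminus(\mathcal{S}_a\cup\mathcal{B}_a)$ is $\mathbb{D}_{hp}$ stable, so no prosumer gains by splitting off to form a new coalition. For uniqueness of the followers' response I would invoke the strategy-proofness of Theorem~\ref{theorem:1}: because truthful bidding is the prosumers' revealed strategy and no participant can profitably deviate from its declared energy, the ordered bids \eqref{eqn:step1_eqn1}, the auction price $p_\text{auc}=p_\text{max}$, and consequently the membership sets $\mathcal{B}_a$ and $\mathcal{S}_a$ are all determined deterministically by the auctioneer. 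The residual set then forms the single mid-market coalition, so the two-coalition profile $(\mathbf{e}^*,\mathbf{p}_{p2p}^*)$ is pinned down uniquely by $p_{g,s}^*(t)$.

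Finally I would assemble the pieces: the unique leader strategy together with the unique, $\mathbb{D}_{hp}$ stable follower partition jointly satisfy both clauses of Definition~\ref{definition:1}, which proves that $\Gamma$ admits a CSE that is both unique and stable. I expect the main obstacle to lie not in the leader's side, which is a direct consequence of the closed form, but in justifying uniqueness on the followers' side -- specifically in showing that strategy-proofness rules out multiple self-consistent partitions, and that the prosumers excluded from the auction coalition strictly prefer the single mid-market coalition (as argued via \cite{Tushar_Access_Oct_2018}) rather than fragmenting into disjoint coalitions or acting noncooperatively.
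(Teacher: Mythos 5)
Your proposal is correct and follows essentially the same route as the paper: the paper likewise argues that the leader's closed-form price \eqref{eqn:Leader-strategy} yields a unique outcome for any $b$ satisfying \eqref{condition_b}, and then reduces the corollary to the $\mathbb{D}_{hp}$ stability of the followers' two-coalition partition established via Theorem~1 (strategy-proofness) and Theorem~2. Your added attention to why the followers' response is \emph{uniquely} determined (deterministic auction outcome under truthful bidding) is a point the paper leaves implicit, but it does not change the argument.
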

\subsection{Prosumer-Centric Property}While the proposed $\Gamma$ is proven to be effective for reducing the cost to the CPS to zero at the peak hours through P2P trading, it is also important that the scheme does not compromise the benefit to the prosumers. Prosumer-centric can be defined as the properties of a technology that can potentially motivate people to participate in or accept the technology~\cite{Tushar_Access_Oct_2018}. Motivational psychology is an ideal way to demonstrate this property~\cite{Tushar_AE_Mar_2019}. This is particularly due to the fact that a prosumer might not always be motivated through economic incentive only.  Now, we provide a summary of motivational psychology properties and then show whether the proposed scheme possesses these characteristics. To this end, what follows is a summary of properties that a prosumer-centric scheme needs to satisfy:
\begin{itemize}
\item\emph{Transitivity:} Transitivity refers to the property that establishes the fact that only one decision of the prosumer would maximize its utility or minimize its cost compared to other alternative decisions. As a consequence, as a rational prosumer, it makes the rational choice~\cite{Tversky_Business_1986}. 
\item\emph{Dominance:} If an action chosen by a prosumer is better than another option in one state and at least as good in all other states, the dominant option is chosen by the prosumer~\cite{Tversky_Business_1986}.
\item\emph{Rational-economic:} If an action economically benefits a prosumer, the prosumer is most likely to take that action. That is, monetary incentive is the key motivator for practicing a technology~\cite{Shipworth:2000}.
\item\emph{Positive reinforcement:} The property of positive reinforcement assumes that an equivalent positive outcome of the choice made by a prosumer would motivate the prosumer to choose the same action in the future to fulfill his relevant objective~\cite{Hockenbury:2003}.
\item\emph{Elaboration likelihood:} According to this property, a technique needs to have the capability to communicate its benefit to the prosumers following either a central path or a peripheral path~\cite{Petty:1986}. The central path is assumed to be suitable when an individual cares about the issue and the peripheral path is more appropriate when the issue may be subjected to convey unfavorable thoughts due to the ambiguity of the message.  
\end{itemize}

Now, based on the discussion in the previous section, the proposed P2P energy trading scheme clearly possesses the prosumer-centric property due to following reasons:
\begin{itemize}
\item Once a prosumer receives a signal from the CPS to participate in P2P trading, it can either cooperate with other prosumers in the network or it can continue trade its energy with the CPS at a higher price. Now, as stated in Corollary~\ref{cor:corollary1}, the cooperation between prosumers results in a stable solution. Therefore, no prosumer would have any incentive to choose an alternative strategy other than P2P trading, other than cooperate with one another. This satisfies the transitivity and dominance properties of the scheme.
\item Since the choice of a coalition and the trading prices produce a unique and stable CSE, and due to the fact that the price to trade energy with the grid is very high, clearly trading in P2P energy network benefits the prosumers economically. Consequently, the P2P trading satisfies the rational-economic property.
\item By demonstrating how the proposed P2P energy trading can benefit both the grid by reducing its peak demand and the prosumers by reducing their cost of energy purchase from the grid in each time slot of P2P trading, the proposed scheme establishes a peripheral path to communicate with the prosumers. Further, by exhibiting the same beneficial outcome for both the grid and prosumers in every time slot of P2P trading, the proposed energy trading scheme also validates its positive reinforcement property.  
\end{itemize}
Thus, based on the above discussion, it is reasonable to state the following Corollary.
\begin{corollary}
The proposed P2P energy trading scheme is a prosumer-centric technique.
\end{corollary}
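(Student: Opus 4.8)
The plan is to establish the corollary by verifying, one at a time, that the scheme satisfies each of the five motivational-psychology properties enumerated above---transitivity, dominance, rational-economic, positive reinforcement, and elaboration likelihood---drawing on the structural results already proven, namely the strategy-proofness of Theorem~\ref{theorem:1} and the uniqueness and $\mathbb{D}_{hp}$ stability of the CSE asserted in Corollary~\ref{cor:corollary1}.

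First, I would address transitivity and dominance jointly, since both concern the existence of a single best action for each prosumer. Once the CPS commits to $p_{g,s}^*(t)$ via \eqref{eqn:Leader-strategy}, each prosumer faces a binary choice: cooperate (join one of the two coalitions for P2P trading) or behave noncooperatively (continue to transact with the CPS). Because $p_{g,s}^*(t)$ is deliberately set above the reservation $p_{n,\text{max}}$ per \eqref{price_threshold2}, the noncooperative option yields strictly lower utility in the peak-demand state and no better utility in any other state, so cooperation is a dominant action. By Corollary~\ref{cor:corollary1} the cooperative response produces a unique stable coalition structure, which means exactly one decision maximizes each prosumer's utility; this is precisely transitivity, while the strict ordering just described gives dominance.

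Second, I would verify the rational-economic property by appealing directly to the high grid price: since trading with the CPS at $p_{g,s}^*(t)$ is costly and the unique stable CSE routes each prosumer into P2P trading at either $p_\text{auc}$ or $p_\text{mid}$, each prosumer realizes a monetary saving relative to the grid alternative, so the economically advantageous action coincides with the equilibrium action. For positive reinforcement I would observe that the argument above is time-slot independent: whenever $E_D(t)>E_T(t)$ the same leader strategy is issued and the same stable two-coalition outcome recurs, so an identical beneficial outcome is reproduced in every such slot, reinforcing the prosumer's choice. Finally, elaboration likelihood follows from the fact that the scheme can be communicated through the peripheral path, by advertising the simultaneous benefit to both the grid (zero excess-supply cost) and the prosumers (reduced purchase cost) in each P2P slot.

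The main obstacle will not be any single calculation---each verification reduces to invoking an already-established result---but rather pinning the qualitative psychological criteria to the quantitative structure tightly enough to count as a proof. In particular, care is needed in the dominance argument to confirm that cooperation remains weakly better in the off-peak states where no P2P instruction is issued, so that the two options coincide and dominance holds with equality there; and in the elaboration-likelihood argument to justify that selecting the peripheral rather than the central path is itself sufficient to satisfy the property.
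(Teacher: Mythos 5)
Your proposal is correct and follows essentially the same route as the paper: both verify the five motivational-psychology properties one by one, deriving transitivity and dominance from the stability and uniqueness of the cooperative solution (Corollary~\ref{cor:corollary1}) together with the deliberately high grid price, the rational-economic property from the resulting monetary saving, positive reinforcement from the recurrence of the same beneficial outcome in every peak slot, and elaboration likelihood from the peripheral-path communication of the mutual grid/prosumer benefit. Your added care about the off-peak states (where the two options coincide, so dominance holds weakly) is a refinement the paper leaves implicit rather than a different argument.
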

\section{Case Studies}\label{lab:NumericalSimulation}
In this section, we show some results from numerical case studies to demonstrate how the proposed P2P energy trading scheme benefits both the CPS and prosumers at the peak hour. For numerical case studies, we consider a residential network with 12 prosumers. Each prosumer is assumed to have a solar panel of 5 kWp. The energy surplus and deficiency of each prosumer is randomly chosen from the range~[2,9]. However, these values could be different for different generation and consumption patterns of prosumers at different locations. To trade the energy at each time slot\footnote{Each time slot is assumed to have a duration of $30$ minutes~\cite{ZhangChenghua_EA_June_2018}.}, the prosumers choose their bidding from the range~[11,15], which is chosen such that the minimum bidding price is higher than the FiT price 10 cents per kWh and the highest bidding price is lower than the grid's selling price 28 cents per kWh at the off-peak hour. The residential solar data is provided by Redback Technologies, Australia, which is a startup company based in Brisbane that provides smart inverter solutions to households. 
\subsection{Benefit to the CPS}
\begin{figure}[t]
\centering
\includegraphics[width=\columnwidth]{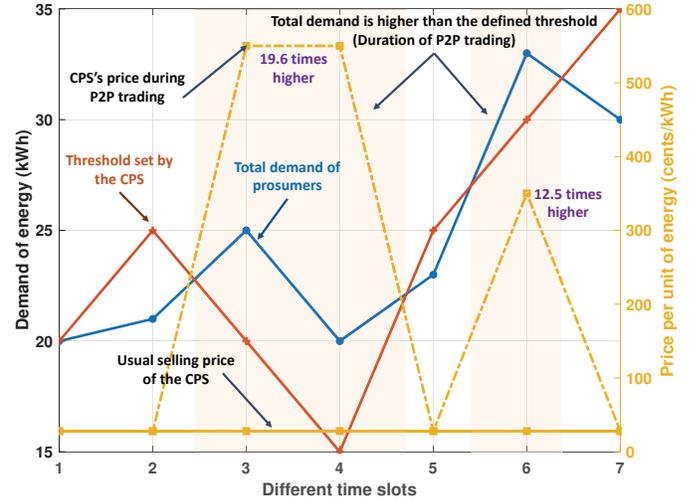}
\caption{This figure demonstrates how the CPS sets selling prices per unit of energy at different time slots based on the total demand from prosumers and the pre-defined threshold, which influence prosumers to participate in P2P trading.}
\label{lab:Demand_price_grid}
\end{figure}
In order to show how the proposed approach benefits the CPS, we first show the strategy adopted by the CPS at different demand scenarios from its contracted prosumers. In particular, in Fig.~\ref{lab:Demand_price_grid}, the selling price per unit of energy chosen by the CPS is demonstrated at different selected time slots during a day. In this figure, we deliberately show only seven time slots for clarity of illustration of how the CPS sets its price based on the demand and threshold at any given time slot. According to Fig.~\ref{lab:Demand_price_grid}, when the total demand of the prosumers is below its threshold of that selected time slot, the grid sells its energy to the prosumers at the general standard off peak rate of $28$ cents per kWh. However, as soon as the total demand becomes greater than the threshold, the CPS changes its trading price for the contracted prosumers to a higher value via \eqref{eqn:Leader-strategy}. For example, at time slot 3 and 4 the price per unit of energy is set to 19.6 times higher than the usual price, whereas at time slot 6, it is 12.5 times. In other instances, when the total demands from the prosumers is within the limits, the CPS sells the energy at the usual rate of 28 cents per kWh. 

\begin{figure}[t]
\centering
\includegraphics[width=\columnwidth]{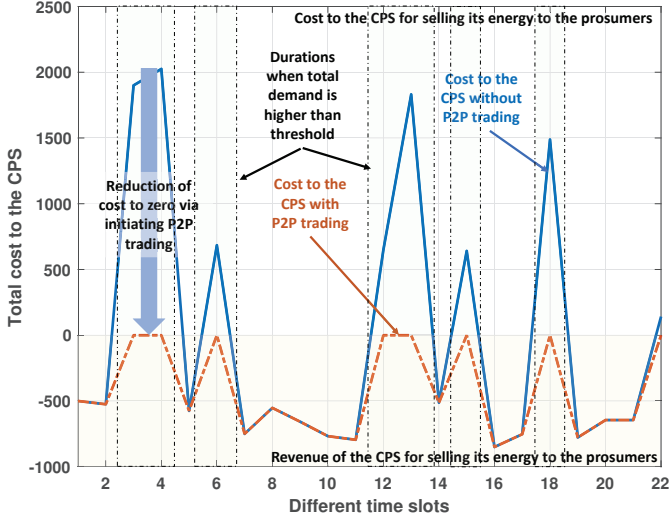}
\caption{The strategy of the CPS reducing its total cost when the total demand of prosumers is higher than the threshold. The negative cost refers to the revenue of the CPS from selling its energy for the case when prosumers' total demand is lower than the threshold.}
\label{lab:Cost_Grid}
\end{figure}
According to the proposed scheme, such a choice of price by the CPS influences the prosumers to participate in P2P trading, instead of trading energy with the CPS. In Fig.~\ref{lab:Cost_Grid}, we show the cost to the CPS at different selected time slots of a day (from 7 am to 5 pm). Based on this figure,
\begin{itemize}
\item At times, when the total demand is lower than the threshold, the CPS is able to make revenue by selling the electricity to the prosumers. For example, at time slots 1, 2, 5, 7--12, 14, 16--18, and 20--22 the revenue to CPS is more than 500 cents.
\item At times, when the total demand is higher than the threshold, due to maintaining reserve and/or running new generation units, the cost to the CPS increases significantly. For instance, at time slot 6, the cost to the CPS increases to around 600 cents for meeting the demand of the prosumers. Similar rises are also observed at time slots 4, 13, 15, and 19.
\item However, for the proposed scheme, the price set by the CPS instructs the prosumers to participate in P2P trading instead of buying the energy from the CPS. This subsequently reduces the cost to the CPS to zero as no excess energy  needs to be generated. 
\end{itemize}
\begin{figure}[t]
\centering
\includegraphics[width=\columnwidth]{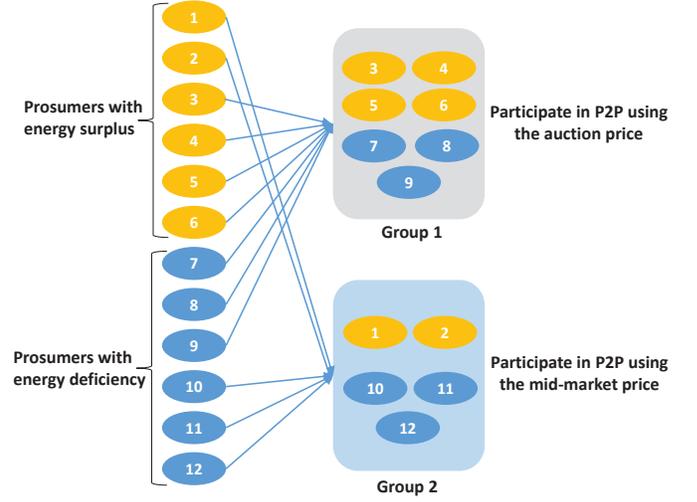}
\caption{Formation of coalitions by different prosumers for P2P energy trading.}
\label{fig:CoalitionFormation}
\end{figure}
\subsection{Benefit to the prosumers}According to the proposed scheme, prosumers participate in P2P energy trading only when they are influenced by the CPS with a very large electricity price signal. Otherwise, they trade their energy with the grid in usual manner. Now, to show how participating in P2P energy scheme can help the prosumers, we show the results for a particular time slot. For example, let us consider time slot 3 of Fig.~\ref{lab:Demand_price_grid} and Fig.~\ref{lab:Cost_Grid}. Clearly, due to the high price of the CPS, prosumers need to participate in P2P trading to balance their demand and surplus. Now, without a loss of generality and as shown in Fig.~\ref{fig:CoalitionFormation}, we consider that prosumers 1, 2, 3, 4, 5 and 6 have surplus, whereas prosumers 7, 8, 9, 10, 11 and 12 need more energy to fulfil their demand. To participate in P2P trading, the prosumers adopt Algorithm~\ref{algo:1} and based on the auction price, prosumers form two coalitions. According to Fig.~\ref{fig:CoalitionFormation}, prosumers 3, 4, 5, 6, 7, 8 and 9 form coalition 1 to trade their energy at the auction price. The rest of the prosumers, that is prosumers 1, 2, 10, 11 and 12, form a second coalition to trade their energy at the mid-market rate, following the process explained in \cite{Tushar_Access_Oct_2018}.
\begin{table}[h!]
\centering
\caption{Demonstration of seller prosumers' increase in their revenues for participating in P2P energy trading.}
\includegraphics[width=\columnwidth]{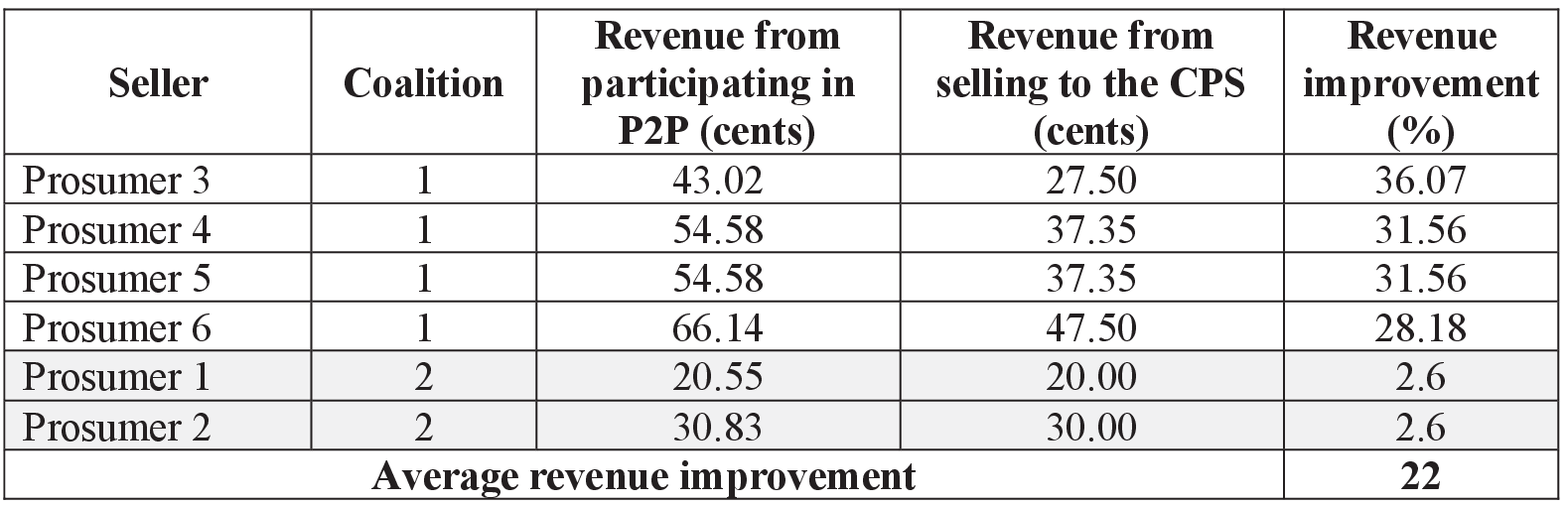}
\label{table:1}
\end{table}

Once the coalition is formed and the trading of energy is taken place at respective coalitions, the sellers and buyers of energy have their own revenue and costs. Now, we show how each prosumer benefits compared to the case of trading energy with the CPS or a third party (such as neighborhood storage) at the selected time slot. In Table~\ref{table:1}, we show the revenue that the seller prosumers in different coalitions obtain for participating in P2P trading. Based on Table~\ref{table:1}, following properties of the scheme can be summarized: 1) auction based P2P demonstrates greater revenue to the prosumer compared to the revenues to the sellers trade their energy at the mid-market rate. This property always encourages prosumers to rebid in the next available time slot of P2P trading in order to get a better benefit; 2) it is always beneficial for the sellers to sell their energy to other prosumers within the P2P network compared to sell them to the CPS; 3) for the considered system and selected time slot, each prosumer achieves a revenue, which is 22\% greater, on average, than the revenue that it could have obtained by selling its surplus to the CPS.
\begin{table}[h!]
\centering
\caption{Demonstration of buyer prosumers' savings in their costs for participating in P2P energy trading.}
\includegraphics[width=\columnwidth]{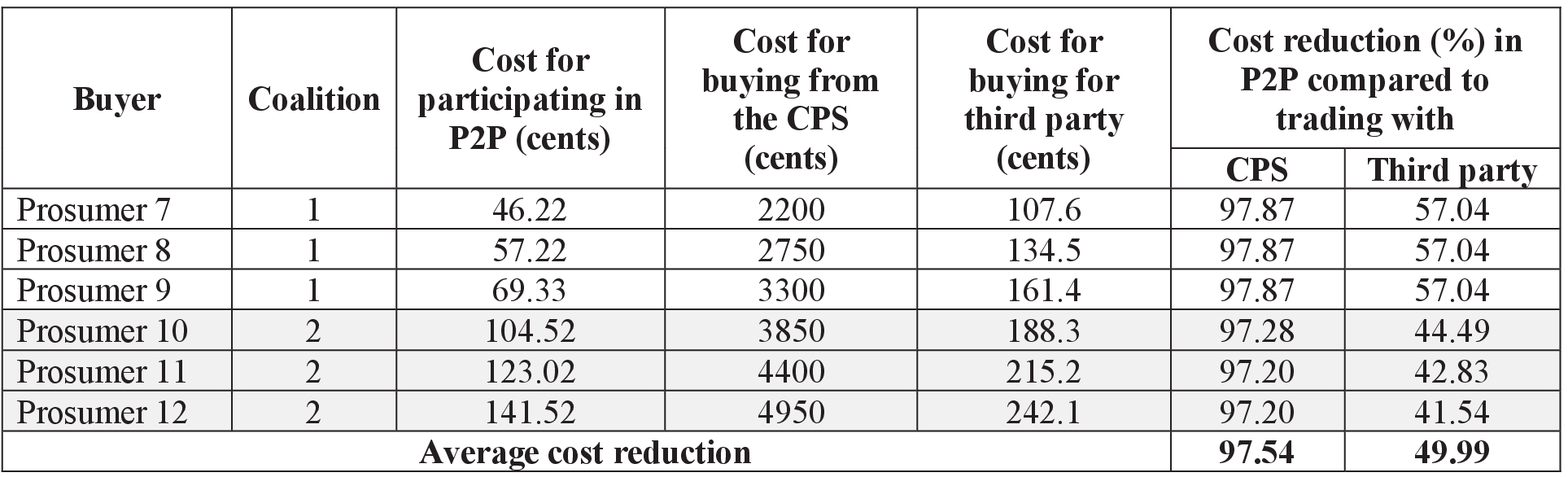}
\label{table:2}
\end{table}

In Table~\ref{table:2}, we show the cost to each buyer within each coalitions for participating in P2P trading. Clearly, P2P trading is the better option for prosumers to meet their demand. Otherwise, the very high price of the CPS\footnote{Example of such a high increase in selling price can also be found in electricity spot markets in high renewable penetrated regions such as South Australia \cite{AEMO_Price_Increase}.} (Fig.~\ref{lab:Demand_price_grid}) increases the cost to each buyer in both coalitions significantly. For example, if  a prosumer decides to buy its required energy from the CPS despite the high price signal, its cost of buying the energy increases by about 97\% compared to buying the energy from the P2P energy trading market. Of course, as a rational entity, a prosumer may also decide to buy its energy from other alternative sources such as third party neighborhood storage~\cite{Mediwaththe_TSG_May_2016}, for example by participating in a non-cooperative game. In such a non-cooperative scenario, each prosumer acts as a rational player who can independently decide on its strategy of whether it wants to buy its energy from the CPS or from the third party in response to the price signal sent by the CPS. Thus, the game framework can be referred to as a non-cooperative Stackelberg game, in which the CPS acts as the leader, which decides on its own strategy of selling price per unit of energy. Prosumers, on the other hand, act as followers, who decide on the venue from which it can buy its energy in response to the price set by the CPS. The solution of the game is a non-cooperative Stackelberg equibrium, at which, due to the very high price of the CPS, all prosumers start trading with the third party. Thus, both the CPS and prosumers reach an equilibrium state from where neither the CPS nor any prosumer want to deviate at the selected time slot.

However, the cost to the prosumers is still very high compared to the price in the P2P trading market. In particular, as can be seen from Table~\ref{table:2}, participating non-cooperatively increases the cost to the prosumer by on average 50\% compared to the P2P energy trading market. Thus, when the CPS sends the high price signal to the prosumers, participating in the proposed P2P energy trading scheme is the best option for both sellers and buyers, instead of acting non-cooperatively, considering their respective revenues and cost.

\begin{table}[h]
\centering
\caption{This table shows how the change in total number of prosumers impacts the total cost to the CPS and prosumers in P2P trading.}
\includegraphics[width=\columnwidth]{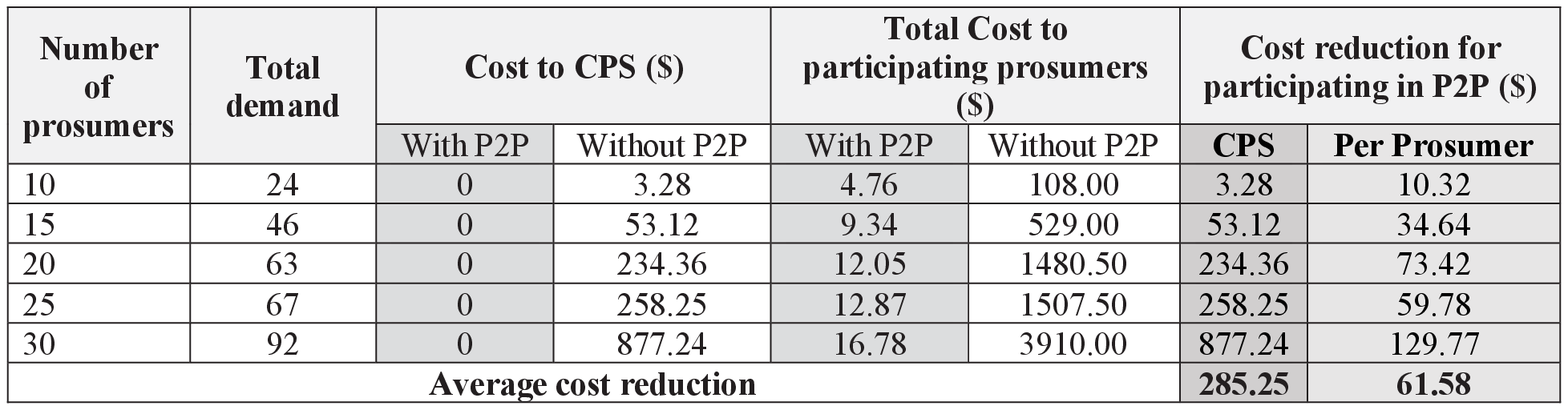}
\label{table:3}
\end{table}
Finally, in Table~\ref{table:3}, we show how an increase in the total number of prosumers impacts the total cost to the CPS and average cost per participating prosumers. Based on the results demonstrated in Table~\ref{table:3},
\begin{itemize}
\item The cost to the CPS is always $0$ when prosumers participate in P2P in response to the CPS's price signal, which is consistent with the proposed scheme. However, if there is no participation in P2P, the cost to the CPS increases substantially as the number of prosumers increases, which is mainly due to the increase in total demand to the CPS.
\item The cost per prosumer\footnote{Total number of prosumer was considered instead of just buyer prosumers to calculate the cost per prosumer of the overall system.} increases as the number of prosumers increases. However, due to the higher selling price per unit of energy, the cost per prosumer is much higher for not participating in P2P compared to the case when prosumers participate in P2P. On average, for all the considered prosumers number, the cost per prosumer shows a \$61.58 reduction for participating in P2P for the considered parameter values in this case study. As for the CPS, the cost savings for the proposed case is \$285.25, on average, when prosumers follow the scheme proposed in this study.
\end{itemize}
Thus, engaging in P2P energy trading by the prosumers following the proposed framework brings benefit to both the CPS and prosumers. Hence, the proposed scheme has the potential to attract the CPS and prosumers to participate in scenarios that are aligned with the assumptions outlined in the paper.

Finally, we note that the computational complexity of the proposed algorithm falls within a category of that of a single leader multiple follower Stackelberg game, which has been shown to be reasonable in numerous studies such as in \cite{Sabita_TSG_2013} and \cite{Wayes-J-TSG:2012}. As such, the computational complexity is feasible for adopting the proposed scheme.
\section{Conclusion}\label{lab:Conclusion} 
In this paper, we have studied the analytical framework of a peer-to-peer energy trading scheme that can help the energy grid to get rid of excess demand from the prosumers at the peak hour, while, at the same time, confirmed a prosumer-centric solution. To this end, we have proposed a cooperative Stackelberg game by assuming the centralized power system as the leader and prosumers as followers. A closed form expression has been presented to capture the decision making process of the leader, whereas we have formulated a double-auction based coalition formation game to facilitate prosumers' decisions of the energy trading parameters. The properties of the proposed cooperative Stackelberg game have been studied and it has been shown that the game possesses a unique and stable Stackelberg equilibrium. Further, we have proposed an algorithm that enables the prosumers and the centralized power station to reach the equilibrium of the game. Finally, we have presented some numerical case studies to show how the proposed scheme can ensure benefits to all participating energy entities in the P2P trading.

A potential extension of the proposed work is the investigation of the impact of uncertainty of prosumers' demand as well as their non-cooperative behavior on the CPS's decision making process. Further, how prosumers can exploit the proposed framework to assist the CPS, and simultaneously participate in P2P energy trading among themselves within the grid-tied setting is another direction that is of interest for further investigation.
\def\baselinestretch{.96}

\end{document}